\long\def\comment#1{}
\begin{document}

\mainmatter  

\title{Incentive Design for Ridesharing with Uncertainty}

\titlerunning{Incentive Design for Ridesharing with Uncertainty}

%
%

\author{Dengji Zhao \and Sarvapali D. Ramchurn \and Nicholas R. Jennings
}
\authorrunning{Zhao \textit{et al}.}


\institute{
Electronics and Computer Science\\
University of Southampton\\
Southampton, SO17 1BJ, UK\\
\{d.zhao, sdr, nrj\}@ecs.soton.ac.uk
}

%
%

\toctitle{Lecture Notes in Computer Science}
\maketitle

\begin{abstract}
We consider a ridesharing problem where there is uncertainty about the completion of trips from both drivers and riders. Specifically, we study ridesharing mechanisms that aim to incentivize commuters to reveal their valuation for trips and their probability of undertaking their trips.
Due to the interdependence created by the uncertainty on commuters' valuations, we show that the Groves mechanisms are not ex-post truthful even if there is only one commuter whose valuation depends on the other commuters' uncertainty of undertaking their trips. 
To circumvent this impossibility, we propose an ex-post truthful mechanism, the best incentive we can design without sacrificing social welfare in this setting. Our mechanism pays a commuter if she undertakes her trip, otherwise she is penalized for not undertaking her trip. Furthermore, we identify a sufficient and necessary condition under which our mechanism is ex-post truthful.
\end{abstract}

\section{Introduction}
\noindent Ridesharing has been touted as a key mechanism to optimise transportation systems since the 1940s. By having multiple road users share a car, it may significantly reduce fuel costs, traffic congestion, and CO$_2$ emissions~\cite{CHAN_2012}. 
Moreover, a number of private ridesharing services such as \textit{Uber} and \textit{Lyft} have introduced real-time online booking systems to allow consumers to book rides seamlessly. 
Despite such efforts, however, the number of users of ridesharing services has not significantly grown over the years.\footnote{The share of US workers commuting by ridesharing/carpooling has declined from 20.4\% in 1970 to just 9.7\% in 2011 (the US Census).} There are a number of reasons for this but here we focus on one of the key challenges: these actors must find it more convenient to share a ride rather than take their own car or other transports. An important factor that affects convenience is the ability to plan trips at short notice, but also to be able to deal with ride cancellations. Unfortunately, in current ridesharing services, if there is a no-show of a driver or a rider, the rider or the driver may be significantly penalized (e.g., \textit{Uber} and \textit{Lyft} charge a user $5$ to $10$ dollars for cancelling a ride in the US). Moreover, both \textit{Uber} and \textit{Lyft} operate like taxi companies with dedicated drivers and standard but low fare rates. They indeed motivate riders to use their services, but hardly involve many low-occupancy vehicles on the roads. Therefore, it is crucial that ridesharing systems are designed to incentivize both riders and drivers to use the services while accounting for the execution uncertainty of their trips. 

To date, researchers have proposed auction-based ridesharing systems that allow more people to participate and also shift the effort of arranging rides from the users to the system~\cite{Kamar_2009,Kleiner_2011,Zhao_2014}. 
{Given people's travel plans/preferences, these auction-based systems automatically compute their sharing schedules and their payments.} 
However, these auction-based systems are vulnerable to manipulations and, crucially, do not deal with the uncertainties described. Hence, in this paper, we study auction-based ridesharing mechanisms that aim to incentivize commuters in such dynamic and uncertain domains and seek to find mechanisms that are robust to manipulations. 

Similar \emph{execution uncertainty} has been addressed in task allocation domains~\cite{Porter_2008,Ramchurn_2009,Stein_2011,Feige_2011,Conitzer_2014}. However, the uncertainty modelled there is agents' ability to complete a task (i.e., whenever an agent is allocated a task, she will always incur the cost of executing the task regardless of her ability to complete it). In contrast, the uncertainty in the ridesharing context is commuters' ``willingness" rather than their ability to undertake their trips. Hence, there is no internal cost to a rider/driver if she does not want to undertake her trip. Moreover, there is no collaboration between agents for completing a task, while in ridesharing commuters have to collaborate to finish a shared trip. {In other domains, mechanisms with verification, e.g., \cite{Nisan_1999,Rose_2012,Caragiannis_2012,Fotakis_2013}, have been designed to verify agents' types after the execution of their actions. However, they are not applicable in ridesharing, because a commuter's uncertainty of undertaking her trip is temporal and is not verifiable from whether she commits.}

Against this background, we investigate incentive mechanisms for the ridesharing domain and attempt to identify scenarios in which these mechanisms will be robust to manipulations. We characterise such scenarios specifically in terms of the commuters' valuation functions (i.e., the value they attribute to rides
). By so doing, we develop a framework to study all valuation settings which, in turn, can inform the design of ridesharing booking systems. Hence, our work advances the state of the art in the following ways:
\begin{itemize}
\item We show that the Groves mechanisms are only truthful in the very special cases: either none of the commuters' valuation depends on the others' probability of undertaking their trips, or the probabilities are publicly known. 
\item Since in general settings, it is impossible to design truthful and efficient mechanisms, we propose an ex-post truthful and efficient mechanism where a commuter is rewarded if she undertakes her trip, otherwise she pays the loss she causes to the others for not undertaking her trip. Ex-post truthfulness is the best incentive we can provide here without sacrificing social welfare.
\item We then identify a sufficient and necessary condition where the proposed mechanism is ex-post truthful. This condition covers a very rich class of valuation settings in practice, but it does eliminate some interesting cases where commuters deliberately choose to not collaborate/commit under certain situations.
\end{itemize}


The remainder of the paper is organized as follows. Section~\ref{sect_model} presents the ridesharing model and the desirable properties of a ridesharing system. Section~\ref{sect_vcg} investigates the applicability of the Groves mechanisms. Sections~\ref{sect_com-pay} and \ref{sect_linearity_cont} propose an new mechanism and identify a sufficient and necessary condition to truthfully implement it. We conclude in Section~\ref{sect_con}.

\section{The Ridesharing Model}
\label{sect_model}
\noindent We study a ridesharing system where there is a set of commuters each of whom has a trip they want to make and a probability that they will eventually make it. Each commuter is either a driver or a rider: a driver can either offer extra seats to riders or ride with others, while a rider can only ride with others.
The trip (aka \textit{type}) of each commuter $i$ is modelled by $\theta_i = (v_i, p_i)$, where $v_i$ is $i$'s valuation function for receiving/offering rides and $p_i \in [0,1]$ is the probability that $i$ will undertake the trip (we call $p_i$ $i$'s \textbf{probability of commitment}). Note that, a trip normally consists of departure/arrival locations, travel times, and travel costs, which together with other travel preferences are all specified by $v_i$. Moreover, $v_i$ also specifies any opt out options such as public transports for both riders and drivers. The precise format of $v_i$ depends on the context of the application and the information available to the users. In order to cover a full range of trip types and the ability of the ridesharing systems to cope with them, we do not restrict the form of the valuation function, e.g., it may have externalities. 
Let $N$ be the set of all commuters, $\theta$ be the trip profile of $N$, $\theta_{-i}$ be the trip profile of $N$ except $i$, and $\theta = (\theta_i, \theta_{-i})$. Furthermore, let $p = (p_i)_{i \in N}$ be the profile of the probability of commitment of $N$, $p_{-i}$ be the probability profile of $N$ except $i$, and $p = (p_i, p_{-i})$.

We study ridesharing mechanisms that require each commuter to report her intended trip to the mechanism. We assume that each commuter's trip is privately observed, i.e., they do not necessarily report their trips truthfully to the mechanism if it is in their interest to do so. We denote $\theta_i$ the true trip of $i$ and $\hat{\theta}_i = (\hat{v}_i, \hat{p}_i)$ her reported trip to the mechanism. A ridesharing mechanism consists of an \textbf{allocation} policy $\pi$ and a \textbf{payment} policy $x$. 
Given the commuters' trip report profile $\hat{\theta}$, the mechanism computes an allocation $\pi(\hat{\theta}) = \{\pi_i(\hat{\theta})\}_{i\in N}$ (i.e., ridesharing schedules) and a payment $x(\hat{\theta}) = \{x_i(\hat{\theta})\}_{i\in N}$. $\pi_i(\hat{\theta}) =(r_i, s_i)$ where $r_i \in \{drive, ride, none\}$ is $i$'s \textbf{role} in the allocation and $s_i$ is the corresponding \textbf{schedule} for $i$ which specifies the times, locations, and commuters with whom $i$ will travel together on her trip. $r_i = drive$ indicates that $i$ will drive and offer rides to some riders, $r_i = ride$ indicates that $i$ will ride with other drivers, and $r_i = none$ indicates that $i$ is not scheduled to travel with the others. Note that a commuter who originally drives can be allocated to ride if that satisfies the goal of the ridesharing mechanism. $x_i(\hat{\theta}) \in \mathbb{R}$ is the payment for $i$. If $x_i(\hat{\theta}) \geq 0$, $i$ pays $x_i(\hat{\theta})$ to the system, otherwise $i$ receives $|x_i(\hat{\theta})|$ from the system. 

Given a trip report profile $\hat{\theta}$, $\pi(\hat{\theta})$ needs to be feasible with respect to the commuters' valuations/preferences and the consistency between their schedules. We say $\pi$ is \textit{feasible} if $\pi(\hat{\theta})$ is feasible for all report profiles $\hat{\theta}$. In the rest of this paper, only feasible allocations are considered.

Other than the basic feasibility, the main goal of the ridesharing mechanisms is to maximize social welfare (in most cases this is equivalent to travel cost minimization). Since the commuters' trips are privately known, the mechanism is only able to maximize social welfare if it can get the commuters' true trips. Therefore, the mechanism needs to incentivize commuters to report their trips truthfully. Moreover, commuters should not lose when they participate in the system, i.e., they are not forced to join the system. In the following, we formally define these properties.

Given a ridesharing allocation, the expected social welfare is the sum of all commuters' valuations on the allocation. We say an allocation $\pi$ is efficient if it maximizes the expected social welfare for any trip report profile $\hat{\theta}$. To simplify notations, we separate the commuters' probability of commitment $\hat{p}$ from the allocation $\pi(\hat{\theta})$, and $i$'s valuation of the allocation $\pi(\hat{\theta})$ is given by $\hat{v}_i(\pi(\hat{\theta}), \hat{p})$.
\begin{definition}
Allocation $\pi$ is \textbf{efficient} if and only if for all trip report profiles $\hat{\theta}$, we have:
$$\sum_{i\in N} \hat{v}_i(\pi(\hat{\theta}), \hat{p}) \geq \sum_{i\in N} \hat{v}_i(\pi^{\prime}(\hat{\theta}), \hat{p})$$ where $\pi^{\prime}$ is any other feasible allocation. 
\end{definition}

Note the expected social welfare calculated by the mechanism is based on the commuters' reported trips $\hat{\theta}$ only. 
However, the commuters' actual/realized valuation depends on their true trip information. That is, $i$'s realized valuation for allocation $\pi(\hat{\theta})$ is $v_i(\pi(\hat{\theta}), p)$ rather than $\hat{v}_i(\pi(\hat{\theta}), \hat{p})$, which depends on $i$'s true valuation $v_i$ and the commuters' true probability of commitment $p$.

Given the commuters' true trip profile $\theta$, reported trip profile $\hat{\theta}$ and mechanism $(\pi, x)$, commuter $i$'s expected \textbf{utility} is quasilinear and defined as: 
$$u_i(\theta_i, \hat{\theta}, \pi, x, p) = v_i(\pi(\hat{\theta}), p) - x_i(\hat{\theta}).$$
We say mechanism $(\pi, x)$ is \textbf{individually rational} if $u_i(\theta_i, (\theta_i, \hat{\theta}_{-i}), \pi, x, p) \geq 0$ for all $i$, all $\theta$, and all $\hat{\theta}_{-i}$. That is, a commuter never receives a negative expected utility 
if she reports truthfully, no matter what others report. Furthermore, we say the mechanism is \textit{truthful} (aka \textit{dominant-strategy incentive-compatible}) if it always maximizes a commuter's expected utility if she reports her trip truthfully, i.e., reporting truthfully is a dominant strategy for all commuters.
\begin{definition}
Mechanism $(\pi, x)$ is \textbf{truthful} if and only if for all $i\in N$, all $\theta$, and all $\hat{\theta}$, we have $u_i(\theta_i, (\theta_i, \hat{\theta}_{-i}), \pi, x, p) \geq u_i(\theta_i, (\hat{\theta}_i, \hat{\theta}_{-i}), \pi, x, p).$
\end{definition}
Another solution concept weaker than dominant-strategy incentive-compatible (but still very valid) is called \textbf{ex-post truthful}, which requires that reporting truthfully maximizes a commuter's utility if everyone else also reports truthfully (i.e. reporting truthfully is an ex-post equilibrium). Ex-post truthful is stronger than \emph{Bayes-Nash truthful} which assumes that all agents know the correct probabilistic model of the distribution on their types.

Given these properties defined in the above, we will study ridesharing mechanisms that are efficient, (ex-post) truthful, and individually rational.
\section{The Groves Mechanisms}
\label{sect_vcg}
\noindent In this section, we analyse the applicability of the well-known set of mechanisms called Groves mechanisms~\cite{Groves_1973} in the ridesharing domain. In many domains, the Groves mechanisms are the only mechanisms that are efficient and truthful. Groves mechanisms apply an efficient allocation $\pi^{eff}$ and charge each commuter $i$ the following:
\begin{equation}
 x_i^{Groves}(\hat{\theta}) = h_i(\hat{\theta}_{-i}) - V_{-i}(\hat{\theta}, \pi^{eff})
\end{equation} 
where
\begin{itemize}
\item $h_i$ is a function that only depends on $\hat{\theta}_{-i}$,
\item $V_{-i}(\hat{\theta}, \pi^{eff}) = \sum_{j \neq i} \hat{v}_j(\pi^{eff}(\hat{\theta}), \hat{p})$ is the social welfare for all commuters, excluding $i$, under the efficient allocation $\pi^{eff}(\hat{\theta})$.
\end{itemize}

Since $h_i$ is independent of $i$'s report, we can set $h_i(\hat{\theta}_{-i}) = 0$, and then each commuter will receive an amount equal to the social welfare of the other commuters. Thus, each commuter's utility is the social welfare of the efficient allocation, which is maximized in domains without valuation interdependence, if the commuter reports truthfully. However, in the ridesharing domain, commuters' valuations are normally interdependent via their probability of commitment. In Theorem~\ref{thm_Groves_notIC}, we show that as soon as there exists one commuter whose valuation depends on the other commuters' probability of commitment, Groves mechanisms cannot even be truthfully implemented in an ex-post equilibrium. That is, reporting truthfully is not a dominant strategy even if everyone else reports truthfully. 

Before we prove the impossibility, let's gain some intuition from an example: consider two commuters $i,j$ travelling from one location to another at the same time, and assume that only $i$ drives and the efficient allocation is to let $j$ ride with $i$. If $j$'s valuation for riding with $i$ is in the form of $\alpha_j\times p_i$ where $\alpha_j > 0$, then $i$ can increase $j$'s valuation by reporting $\hat{p}_i > p_i$ to receive a higher utility.

We say commuter $i$'s valuation is \textit{external-commit-independent} if it is independent of the probability of commitment of the other commuters. 
\begin{definition}
Valuation $v_i$ of commuter $i$ is \textbf{external-commit-independent} if for all trip profiles ${\theta}$, all allocations $\pi$, and all probability profiles $\bar{p} = (\bar{p}_j)_{j\in N}$ where $\bar{p}_j \in [0,1]$, we have $\bar{p}_i = p_i$ implies $v_i(\pi({\theta}), \bar{p}) = v_i(\pi({\theta}), p).$
\end{definition}

\begin{theorem}
\label{thm_Groves_notIC}
The Groves mechanism is not ex-post truthful if 
there exists $j\in N$ s.t. $v_j$ is not external-commit-independent.
\end{theorem}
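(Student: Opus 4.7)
The plan is to establish the failure of ex-post truthfulness by exhibiting a single ridesharing instance in which some commuter $i$ strictly profits by misreporting her probability of commitment while everyone else reports truthfully. Since the $h_i$ component of any Groves payment is independent of $\hat\theta_i$, the only quantities I need to track under $i$'s deviation are the efficient allocation $\pi^{eff}$ and the ``others' reported welfare'' $V_{-i}$; the overall strategy is to engineer an instance in which $\pi^{eff}$ is frozen across the deviation while $V_{-i}$ strictly moves in $i$'s favour.

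The first step is to unpack the hypothesis. Because $v_j$ is not external-commit-independent, the definition yields an allocation $\pi_0$ and a probability profile $\bar p$ with $\bar p_j = p_j$ and $v_j(\pi_0,\bar p)\neq v_j(\pi_0,p)$. A routine one-coordinate-at-a-time argument (change $\bar p$ into $p$ one entry at a time and keep the step on which $v_j$ actually moves) extracts a particular $i\neq j$ and two concrete scalars $p_i^a,p_i^b\in[0,1]$ with, after possibly swapping the two, $v_j(\pi_0,(p_i^b, p_{-i}))>v_j(\pi_0,(p_i^a, p_{-i}))$. I then declare $p_i^a$ to be $i$'s true commitment probability and let her contemplate the deviation $\hat p_i = p_i^b$.

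The second step is to build the remainder of the instance around this witness so that $\pi^{eff}(\hat\theta_i,\theta_{-i})=\pi_0$ both when $\hat p_i = p_i^a$ and when $\hat p_i = p_i^b$. The paper imposes essentially no restriction on the form of the valuation functions (externalities are explicitly permitted), so I have room to choose $v_i$ and any other commuters' trips so that $\pi_0$ dominates every other feasible allocation by a margin strictly larger than $v_j(\pi_0,(p_i^b, p_{-i}))-v_j(\pi_0,(p_i^a, p_{-i}))$; the two-commuter driver/rider scenario sketched just before the theorem is essentially already enough. With $\pi^{eff}$ thus frozen, the Groves utility of $i$ reduces to
\[
u_i = v_i(\pi_0,p) - h_i(\theta_{-i}) + \sum_{k\neq i} v_k\bigl(\pi_0,(\hat p_i, p_{-i})\bigr),
\]
in which only the $v_j$ summand depends on $\hat p_i$, and it is strictly larger at $p_i^b$ than at $p_i^a$. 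That directly contradicts ex-post truthfulness.

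The step I expect to be the real obstacle is the freezing of $\pi^{eff}$ in step two: the hypothesis only guarantees sensitivity of $v_j$ at the single allocation $\pi_0$, so \emph{a priori} a change in $\hat p_i$ could push $\pi^{eff}$ away from $\pi_0$ and harm $v_i(\pi^{eff},p)$ by more than $i$ gains through $V_{-i}$. The way around this is precisely the construction freedom above: the theorem asserts the negation of a universal claim, so producing one witness instance in which $\pi_0$ remains efficient under both reports suffices, and the unrestricted valuation model makes such an instance straightforward to engineer.
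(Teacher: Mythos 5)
Your proposal is correct and follows essentially the same route as the paper's own proof: isolate a single coordinate $k$ (your $i$) whose probability report moves $v_j$, freeze the efficient allocation by exploiting the unrestricted valuation model, and then let that commuter profit through the $V_{-i}$ term, choosing which of the two probabilities is the ``true'' one according to the sign of the resulting utility difference. Your explicit one-coordinate-at-a-time hybrid argument and your choice to make the remaining valuations insensitive to $\hat p_i$ are minor tightenings of steps the paper handles with a ``without loss of generality'' and a ``there always exists a situation,'' but the substance is identical.
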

\begin{proof}
Given that $j$'s valuation is not external-commit-independent, there exist a report profile $\theta$, an allocation $\pi$, and a probability profile $\bar{p} = (\bar{p}_i)_{i\in N}$ where $\bar{p}_i \in [0,1]$ such that $\bar{p}_j = p_j$ and $v_j(\pi(\theta), \bar{p}) \neq v_j(\pi(\theta), p)$. Without loss of generality assume that $\bar{p}$ only differs from $p$ in $k$'s probability of commitment, i.e., $\bar{p}_k \neq p_k$ and $\bar{p}_{-k} = p_{-k}$. 

Under efficient allocation $\pi^{eff}$, it is not hard to find a trip profile $\hat{\theta}_{-j}$ such that $\hat{p}_{-j} = p_{-j}$ and $\pi^{eff}(\theta_j, \hat{\theta}_{-j}) = \pi(\theta)$. We can choose $\hat{\theta}_{-j}$ by setting $\hat{v}_i(\pi(\theta), p)$ to a sufficiently large value for each $i\neq j$. Moreover, we require that the allocation $\pi^{eff}(\theta_j, \hat{\theta}_{-j})$ does not change if $k$ reported a different probability of commitment $\bar{p}_k$ rather than $p_k$, which can be achieved by setting $k$'s valuation $\hat{v}_k(\pi(\theta), \bar{p})$ to a sufficiently large value (no matter whether $\hat{v}_k$ is external-commit-independent).

In what follows, we show that there exist situations where commuter $k$ is incentivized to misreport. Under trip profile $(\theta_j, \hat{\theta}_{-j})$, we know that $k$ can change $j$'s valuation without changing the allocation by reporting a different probability of commitment. Regardless of the changes of the other commuters' valuations when $k$ changes her probability of commitment, there always exists a situation s.t. $v_j(\pi(\theta), p) + \sum_{i\in N\setminus\{j,k\}} \hat{v}_i(\pi(\theta), p) \neq v_j(\pi(\theta), \bar{p}) + \sum_{i\in N\setminus\{j,k\}} \hat{v}_i(\pi(\theta), \bar{p})$, even if the valuations of all commuters except $j$ are external-commit-independent, i.e., $\sum_{i\in N\setminus\{j,k\}} \hat{v}_i(\pi(\theta), p) = \sum_{i\in N\setminus\{j,k\}} \hat{v}_i(\pi(\theta), \bar{p})$. \\If $v_j(\pi(\theta), p) + \sum_{i\in N\setminus\{j,k\}} \hat{v}_i(\pi(\theta), p) < v_j(\pi(\theta), \bar{p}) + \sum_{i\in N\setminus\{j,k\}} \hat{v}_i(\pi(\theta), \bar{p})$, then commuter $k$ of true probability of commitment $p_k$ would report $\bar{p}_k \neq p_k$ to gain a better utility. 
Otherwise, 
commuter $k$ of true probability of commitment $\bar{p}_k$ would report $p_k$ to gain a better utility. In both situations, we assume that the other commuters truthfully report their trips.
\qed
\end{proof}

Theorem~\ref{thm_Groves_notIC} shows that the Groves mechanisms \emph{cannot} be truthfully implemented in an ex-post equilibrium even if there is only one commuter whose valuation depends on the others' probability of commitment. This is a rather negative result as it says that Groves mechanisms are not applicable in all valuation settings that are interdependent via their probability of commitment. However, Theorem~\ref{thm_Groves_IC} shows that if their uncertainty of commitment is known by the mechanism, i.e., the interdependence of their valuations is known by the mechanism, then reporting valuation truthfully is still a dominant strategy in the Groves mechanisms. In some real-world applications, the commuters' probability of commitment might be computable by the ridesharing system from, say, their history participations/trips.

\begin{theorem}
\label{thm_Groves_IC}
The Groves mechanism is truthful if for all $i\in N$, 
$p_i$ is known by the mechanism.
\end{theorem}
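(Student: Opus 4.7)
The plan is to reduce the theorem to the standard textbook argument for the truthfulness of Groves, once we recognize that with $p$ publicly known there is no remaining interdependence that a commuter can exploit strategically. Since each $p_i$ is known to the mechanism, a commuter's report reduces to her valuation $\hat v_i$; the true probability profile $p$ is used directly in computing both the allocation and the payments, and in particular in evaluating each $\hat v_j(\cdot, p)$ that enters $V_{-i}$. The contrast with Theorem~\ref{thm_Groves_notIC} is precisely that the channel of manipulation there was misreporting $\hat p_k$ to shift another agent's valuation $v_j$; this channel is closed here.

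Concretely, I would first write out commuter $i$'s expected utility under the Groves payment with known $p$:
\begin{equation*}
u_i = v_i(\pi^{eff}(\hat\theta), p) + \sum_{j\neq i} \hat v_j(\pi^{eff}(\hat\theta), p) - h_i(\hat\theta_{-i}).
\end{equation*}
Since $h_i$ is independent of $i$'s report, maximizing $u_i$ over $\hat v_i$ (holding $\hat v_{-i}$ fixed) is equivalent to choosing $\hat v_i$ so that the induced allocation $\pi^{eff}(\hat v_i, \hat v_{-i})$ maximizes $v_i(\pi, p) + \sum_{j\neq i}\hat v_j(\pi, p)$ over feasible allocations $\pi$. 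I would then observe that, by definition of $\pi^{eff}$, reporting $\hat v_i = v_i$ causes the mechanism to select exactly the allocation that maximizes this sum, so truthful reporting achieves the maximum. Any other report $\hat v_i \neq v_i$ can only induce an allocation whose value under the true objective is no larger. This establishes dominant-strategy truthfulness, since the argument held for arbitrary $\hat v_{-i}$.

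I do not anticipate a serious obstacle here; the only subtlety worth highlighting is precisely the point that distinguishes this theorem from Theorem~\ref{thm_Groves_notIC}. One must be explicit that, because $p$ is known rather than reported, the term $\sum_{j \neq i} \hat v_j(\pi, p)$ depends on $i$'s action only through $\pi$, and not through any reported probability $\hat p_i$ that could alter $\hat v_j$'s functional value. Once this is noted, the remainder of the argument is the classical Groves calculation and can be stated in a few lines.
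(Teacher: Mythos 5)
Your proposal is correct and follows essentially the same route as the paper: both arguments rest on the single key observation that when $p$ is publicly known, commuter $i$'s report can affect $\sum_{j\neq i}\hat v_j(\cdot,p)$ only through the chosen allocation, after which the classical Groves alignment of individual utility with social welfare (minus the report-independent $h_i(\hat\theta_{-i})$) completes the proof. The paper phrases this via the two conditions of Proposition 9.27 of Nisan et al.\ (payment depends only on the allocation, and truthful reporting maximizes utility), while you carry out the direct utility calculation, but the substance is identical.
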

\begin{proof}
According to Proposition $9.27$ from \cite{nisan_algorithmic_2007}, we need to show that for all profiles $\theta$, for all $i\in N$:
\begin{enumerate}
\item $x_i^{Groves}(\theta)$ does not depend on $\theta_i$, but only on the alternative allocation $\pi^{eff}(\theta)$. That is, for all $\hat{\theta}_i \neq \theta_i$, if $\pi^{eff}(\hat{\theta}_i, \theta_{-i}) = \pi^{eff}(\theta)$, then $x^{Groves}_i(\hat{\theta}_i, \theta_{-i}) = x^{Groves}_i(\theta)$;
\item $i$'s utility is maximized by reporting $\theta_i$ truthfully.
\end{enumerate}

Given that $p_i$ is known by the mechanism (i.e., $i$ does not need to report $p_i$), $i$ can only change others' valuations by changing the allocation, and therefore $x_i^{Groves}(\theta)$ does not depend on $\theta_i$, but only on the allocation $\pi^{eff}(\theta)$. This is not the case when $p_i$ is privately known because, as shown in Theorem~\ref{thm_Groves_notIC}, $i$ may change the other commuters' valuation without changing the allocation.

For each commuter $i$, her expected utility is $v_i(\pi^{eff}(\theta), p) - x^{Groves}_i(\theta) = v_i(\pi^{eff}(\theta), p) + V_{-i}(\theta, \pi^{eff}) - h_i(\theta_{-i})$, where the first two terms together are the social welfare and $h_i(\theta_{-i})$ is independent of $\theta_i$. Since the allocation $\pi^{eff}$ is efficient, so the social welfare and therefore $i$'s utility is maximized when $i$ reports truthfully. 
\qed
\end{proof}

It is worth mentioning that Theorems~\ref{thm_Groves_notIC} and \ref{thm_Groves_IC} do not rely on the form of $h_i$ in $x^{Groves}_i$. 
We normally set $h_i$ to be the maximum social welfare that the others can obtain without $i$'s participation, which is known as the Clarke pivot rule (the corresponding mechanism is known as VCG). The Clarke pivot rule guarantees that all commuters' expected utilities are non-negative, i.e., it satisfies individual rationality, and also charges all commuters the maximum amount without violating individual rationality.  
\section{Commit-Based-Pay Mechanisms}
\label{sect_com-pay}
\noindent As shown in the last section, the Groves mechanisms are not applicable when the probability of commitment is privately known by the commuters. We also showed that this is due to the interdependence of the commuters' valuation created by their probability of commitment. The other reason why Groves mechanisms cannot prevent commuters' manipulations is that the Groves payment is calculated according to the commuters' reported probability of commitment rather than their realized/true probability of commitment. 

To combat this problem, one solution that has been proposed for tackling execution uncertainty in task allocation domains is that an agent is paid according the realized execution of her actions rather than what she reported~\cite{Porter_2008}. Following this principle, we define two payments for each commuter according to the realized commitment of her trip: one for successfully committing to her trip and the other for failing the commitment. The payment for successfully committing to her trip is a kind of reward, while the one when she fails works like a penalty. We call this kind of payment commit-based payment.

Given the commuters' trip report profile $\hat{\theta}$ and the efficient allocation $\pi^{eff}$, the \textit{commit-based payment} $x^{com}$ for each commuter $i$ is defined as:
\begin{equation}
\label{eq_commit_pay}
 x_i^{com}(\hat{\theta}) = 
 \begin{cases}
     h_i(\hat{\theta}_{-i}) - V^1_{-i}(\hat{\theta}, \pi^{eff}) &    \text{\	     if $i$ commits her trip,}\\
     h_i(\hat{\theta}_{-i}) - V^0_{-i}(\hat{\theta}, \pi^{eff}) &      \text{\	     if $i$ does not commit her trip.}
 \end{cases}
\end{equation}
where
\begin{itemize}
\item $h_i(\hat{\theta}_{-i}) = \sum_{j \in N\setminus \{i\}} \hat{v}_j(\pi^{eff}(\hat{\theta}_{-i}), \hat{p}_{-i})$ is the maximum expected social welfare that the other commuters can achieve without $i$'s participation,
\item $V^1_{-i}(\hat{\theta}, \pi^{eff}) = \sum_{j \in N\setminus \{i\}} \hat{v}_j(\pi^{eff}(\hat{\theta}), (1, \hat{p}_{-i}))$
is the expected social welfare of all commuters except $i$ under the efficient allocation $\pi^{eff}(\hat{\theta})$ when $i$ commits. $V^0_{-i}(\hat{\theta}, \pi^{eff}) = \sum_{j \in N\setminus \{i\}} \hat{v}_j(\pi^{eff}(\hat{\theta}), (0, \hat{p}_{-i}))$ is the corresponding social welfare when $i$ fails to commit.
\end{itemize}
$x^{com}_i$ pays/rewards commuter $i$ the social welfare increased by $i$ if she commits and charges/penalizes her the social welfare loss if she does not commit. Theorem~\ref{thm_exic_com} shows that the mechanism $(\pi^{eff}, x^{com})$ is truthful in an ex-post equilibrium if all commuters' valuation is linear in commitment (Definition~\ref{def_linear}). 

\begin{definition}
\label{def_linear}
Valuation $v_i$ of $i$ is \textbf{linear in commitment} if for all trip profiles $\theta$, all allocations $\pi$, and all $j\in N$, $v_i(\pi(\theta), p) = p_j\times v_i(\pi(\theta), (1, p_{-j})) + (1-p_j)\times v_i(\pi(\theta), (0, p_{-j}))$.
\end{definition} 
Intuitively, $v_i$ is linear in commitment if for all allocations $v_i$ is linear in the probability of commitment of all commuters including $i$ (see an example in Section~\ref{sect_linearity_cont}). 
It is evident that external-commit-independent valuations are also linear in commitment.

\begin{theorem}
\label{thm_exic_com}
Mechanism $(\pi^{eff}, x^{com})$ is ex-post truthful and individually rational if for all $i\in N$, $v_i$ 
is linear in commitment.
\end{theorem}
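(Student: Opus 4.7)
The plan is to show that, when every other commuter reports truthfully, commuter $i$'s expected utility under $(\pi^{eff}, x^{com})$ equals the true social welfare of the resulting allocation, minus a term that does not depend on $i$'s report. The efficient allocation chosen at the truthful profile then maximizes this quantity, which gives ex-post truthfulness. Individual rationality follows by comparing the expected social welfare with the best welfare achievable without $i$.

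First I would fix $i$, assume $\hat{\theta}_{-i} = \theta_{-i}$, and compute $i$'s expected payment by taking expectation over $i$'s actual commitment, which occurs with her true probability $p_i$ (note: the reported $\hat{p}_i$ may influence only the chosen allocation $\pi^{eff}(\hat{\theta}_i, \theta_{-i})$, not whether $i$ eventually commits). From the definition of $x^{com}_i$ in Equation~(\ref{eq_commit_pay}),
\begin{equation*}
\mathbb{E}[x^{com}_i] = h_i(\theta_{-i}) - \bigl[\, p_i\, V^1_{-i}(\hat{\theta},\pi^{eff}) + (1-p_i)\, V^0_{-i}(\hat{\theta},\pi^{eff})\bigr].
\end{equation*}
Because every $v_j$ for $j\neq i$ is linear in commitment, applying Definition~\ref{def_linear} to the coordinate corresponding to $i$'s probability yields, term by term,
\begin{equation*}
p_i\, v_j(\pi^{eff}(\hat{\theta}),(1,p_{-i})) + (1-p_i)\, v_j(\pi^{eff}(\hat{\theta}),(0,p_{-i})) = v_j(\pi^{eff}(\hat{\theta}),p).
\end{equation*}
Summing over $j\neq i$ collapses the bracketed expression to the true expected welfare $V_{-i}(\pi^{eff}(\hat{\theta}),p)$ of all commuters other than $i$.

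Substituting this back gives
\begin{equation*}
\mathbb{E}[u_i] = v_i(\pi^{eff}(\hat{\theta}_i,\theta_{-i}),p) + V_{-i}(\pi^{eff}(\hat{\theta}_i,\theta_{-i}),p) - h_i(\theta_{-i}) = \sum_{j\in N} v_j(\pi^{eff}(\hat{\theta}_i,\theta_{-i}),p) - h_i(\theta_{-i}).
\end{equation*}
Since $h_i(\theta_{-i})$ is independent of $\hat{\theta}_i$, commuter $i$ wants to pick $\hat{\theta}_i$ so that $\pi^{eff}(\hat{\theta}_i,\theta_{-i})$ maximizes the true social welfare $\sum_j v_j(\cdot,p)$. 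Reporting $\hat{\theta}_i = \theta_i$ makes the mechanism compute the efficient allocation at the true profile, which by definition attains this maximum; hence truth-telling is an ex-post best response, establishing ex-post truthfulness.

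For individual rationality, assume the standard convention that any allocation on $N\setminus\{i\}$ can be extended to a feasible allocation on $N$ by assigning $i$ the role $\textit{none}$, for which her valuation is $0$ (i.e., commuters can always opt out at value zero). Under truthful reporting, the expected utility equals $\sum_{j\in N} v_j(\pi^{eff}(\theta),p) - h_i(\theta_{-i})$. Extending the $i$-excluding optimum by giving $i$ the role \textit{none} produces a feasible allocation with total value $h_i(\theta_{-i}) + 0$, which $\pi^{eff}(\theta)$ must weakly dominate by efficiency, yielding $\mathbb{E}[u_i]\ge 0$. The main subtlety in the whole argument is keeping the reported probability $\hat{p}_i$ (which enters only via the allocation) separate from the true $p_i$ (which governs actual commitment and hence expected payment); once that bookkeeping is done, linearity in commitment is exactly what is needed to turn the commit-based payment into a Groves-style payment evaluated at true probabilities, after which the standard Groves argument goes through.
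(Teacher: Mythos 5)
Your proposal is correct and takes essentially the same route as the paper's proof: take the expectation of the commit-based payment over $i$'s realized commitment, use linearity in commitment to collapse it into a Groves-style payment evaluated at the true probabilities, and then apply the standard efficiency argument, with individual rationality following from comparing against the welfare-without-$i$ benchmark $h_i(\theta_{-i})$. The only small imprecision is that you invoke linearity only for $j\neq i$, whereas the term $v_i(\pi^{eff}(\hat{\theta}_i,\theta_{-i}),p)$ in your expected-utility formula also relies on $v_i$'s own linearity in the $i$-th coordinate (conditional on committing or failing, $i$'s realized valuation is $v_i(\cdot,(1,p_{-i}))$ resp.\ $v_i(\cdot,(0,p_{-i}))$, exactly as the paper spells out); the theorem's hypothesis covers this, so nothing breaks.
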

\begin{proof}
Similar to the proof of Theorem~\ref{thm_Groves_IC}, we need to prove that for all $i\in N$:
\begin{enumerate}
\item $x^{com}_i$ does not depend on $i$'s report, but only on the alternative allocation;
\item $i$'s utility is maximized by reporting $\theta_i$ truthfully if the others report truthfully.
\end{enumerate}
From the definition of $x^{com}_i$ in \eqref{eq_commit_pay}, we can see that given an allocation $\pi^{eff}(\hat{\theta})$, commuter $i$ cannot change $V^1_{-i}(\hat{\theta}, \pi^{eff})$ and $V^0_{-i}(\hat{\theta}, \pi^{eff})$ without changing the allocation. Therefore, $x^{com}_i$ does not depend on $i$'s report, but only on the alternative allocation.

In what follows, we show that for each commuter $i$, if the others report trips truthfully, then $i$'s utility is maximized by reporting her trip truthfully.

Given a commuter $i$'s trip $\theta_i$ and the others' true trip profile ${\theta}_{-i}$, assume that $i$ reports $\hat{\theta}_i \neq \theta_i$. According to $x^{com}_i$, when $i$ finally commits to her trip, $i$'s utility is $u_i^1 = v_i(\pi^{eff}(\hat{\theta}_i, \theta_{-i}), (1, p_{-i})) - h_i({\theta}_{-i}) + V^1_{-i}({\theta}, \pi^{eff})$ and her utility if she fails is $u_i^0 = v_i(\pi^{eff}(\hat{\theta}_i, \theta_{-i}), (0, p_{-i})) - h_i({\theta}_{-i}) + V^0_{-i}({\theta}, \pi^{eff})$. Note that $i$'s expected valuation depends on her true valuation and the commuters' true probability of commitment $p$. Therefore, $i$'s expected utility is:
\begin{align}
p_i\times u_i^1 + (1-p_i)\times u_i^0 =& \nonumber \\
&p_i\times v_i(\pi^{eff}(\hat{\theta}_i, \theta_{-i}), (1, p_{-i})) \label{eq_1} \\
&+ (1-p_i)\times v_i(\pi^{eff}(\hat{\theta}_i, \theta_{-i}), (0, p_{-i})) \label{eq_2} \\
&+ p_i\sum_{j \in N\setminus \{i\}} {v}_j(\pi^{eff}(\hat{\theta}_i, \theta_{-i}), (1, p_{-i})) \label{eq_3} \\
&+ (1-p_i)\sum_{j \in N\setminus \{i\}} {v}_j(\pi^{eff}(\hat{\theta}_i, \theta_{-i}), (0, p_{-i})) \label{eq_4} \\
&- h_i({\theta}_{-i}). \nonumber
\end{align}
Since all valuations are linear in commitment, the sum of \eqref{eq_1} and \eqref{eq_2} is equal to $v_i(\pi^{eff}(\hat{\theta}_i, {\theta}_{-i}), p)$, and the sum of \eqref{eq_3} and \eqref{eq_4} is $\sum_{j \in N\setminus \{i\}} {v}_j(\pi^{eff}(\hat{\theta}_i, {\theta}_{-i}), p)$. Thus, the sum of \eqref{eq_1}, \eqref{eq_2}, \eqref{eq_3} and \eqref{eq_4} is the social welfare under allocation $\pi^{eff}(\hat{\theta}_i, {\theta}_{-i})$. This is maximized when $i$ reports truthfully because $\pi^{eff}$ maximizes social welfare, which is not the case when $\theta_{-i}$ is not truthfully reported.
Moreover, $h_i({\theta}_{-i})$ is independent of $i$'s report and is the maximum social welfare that the others can achieve without $i$. Therefore, by reporting $\theta_i$ truthfully, $i$'s utility is maximized and non-negative (i.e., individually rational). 
\qed
\end{proof}

The condition of linear in commitment guarantees that $(\pi^{eff}, x^{com})$ is truthfully implemented in an ex-post equilibrium (ex-post truthful), but not in a dominant strategy (truthful). As shown in the task allocation domains considering execution uncertainty~\cite{Porter_2008,Ramchurn_2009,Stein_2011,Conitzer_2014}, ex-post truthfulness is the best we can achieve here. It is not hard to find an example where a commuter is incentivized to misreport if some commuters have misreported.
Ex-post truthfulness is also strongly applicable in domains like ridesharing because computing manipulations is both computationally hard and requiring the full knowledge of all commuters' reports. 


\section{Linear in Commitment is Necessary for Truthfully Implementing $(\pi^{eff}, x^{com})$}
\label{sect_linearity_cont}
\noindent This section shows that linear in commitment condition is also necessary for $(\pi^{eff}, x^{com})$ to be ex-post truthful. We first demonstrate an intuitive example showing that if the valuations of all commuters except one are linear in commitment, then there exist settings where a commuter is incentivized to misreport in $(\pi^{eff}, x^{com})$. Then we further prove that for all commuters $i$, if $v_i$ is not linear in commitment, then there exists a setting such that $(\pi^{eff}, x^{com})$ is not ex-post truthful.

Consider a scenario of two commuters $i,j$ travelling on the same route at the same time, and assume that $i$ has a car with one extra seat to share and $j$ does not have a car to share with others. Therefore the only sharing allocation is that $j$ rides with $i$, if their total expected valuation is greater than what $i,j$ will have when they travel alone. Assume that the valuations for $i,j$ are defined as follows:
\begin{equation}
\label{eq_v_rider}
 v_i = 
 \begin{cases}
     \alpha_i\times p_i\times p_j & \text{if $i$ offers a ride to $j$,}\\
     -\infty & \text{if $i$ rides with $j$,}\\
     0 & \text{if $i$ travels alone.}
 \end{cases}
\end{equation}
where $\alpha_i \leq 0$ is a constant and represents the costs to $i$ for offering a ride to $j$.
\begin{equation}
\label{eq_v_rider2}
 v_j =
 \begin{cases}
     \beta_j\times p_i\times p_j & \text{if $j$ rides with $i$ and $p_i \geq r_j$,}\\
     0 & \text{if $j$ rides with $i$ and $p_i < r_j$,}\\
     -\infty & \text{if $j$ offers a ride to $i$,}\\
     0 & \text{if $j$ travels alone.}
 \end{cases}
\end{equation}
where $\beta_j \geq 0$ is a constant and represents the benefits, e.g., costs saved, that $j$ will receive via riding with $i$, and $r_j \in (0,1]$ is $j$'s minimum requirement on her driver's probability of commitment. 
If $p_i < r_j$, $j$ will not ride with $i$, i.e., $j$ does not want to ride with someone who is not very reliable. 

It is easy to check that $v_i$ is linear in commitment, but $v_j$ is not. Assume that $p_i < r_j$, i.e. $i,j$ are not matched to share if they both report truthfully and therefore their utilities are zero. We will show that $i$ can misreport a probability of commitment $\hat{p}_i \geq r_j$ to gain a positive utility under $(\pi^{eff}, x^{com})$ if $\alpha_i\times p_i\times p_j + \beta_j\times p_i\times p_j > 0$.

Since $i$'s true probability of commitment cannot be verified by $j$ or the system from whether $i$ commits, which is especially true if their probability of commitment changes every time they travel. 
Thus, in the above example, $i$ can misreport $\hat{p}_i \geq r_j$ to get matched with $j$, and $i$'s payment will be:
\begin{equation*}
 x^{com}_i = 
 \begin{cases}
      -\beta_j\times p_j  & \text{ if $i$ committed,}\\
      0 & \text{ if $i$ did not commit.}
 \end{cases}
\end{equation*}
Then $i$'s expected utility is $p_i\times ( \alpha_i\times p_j + \beta_j\times p_j) + (1-p_i)\times 0 = \alpha_i\times p_i\times p_j + \beta_j\times p_i\times p_j$. If $\alpha_i\times p_i\times p_j + \beta_j\times p_i\times p_j > 0$, $i$ is incentivized to misreport $\hat{p}_i \geq r_j > p_i$.

The above example shows that even if only one commuter's valuation is not linear in commitment, there exist settings where $(\pi^{eff}, x^{com})$ is not ex-post truthful. Theorem~\ref{thm_necessary_lic} further proves that linear in commitment becomes necessary for $(\pi^{eff}, x^{com})$ to be ex-post truthful in general.
\begin{theorem}
\label{thm_necessary_lic}
If $(\pi^{eff}, x^{com})$ is ex-post truthful for all trip profiles $\theta$, then for all $i\in N$, $v_i$ is linear in commitment.
\end{theorem}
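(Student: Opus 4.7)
The plan is to prove the contrapositive: assume some commuter $i^* \in N$ has a valuation $v_{i^*}$ that fails the linear-in-commitment condition, and exhibit a trip profile at which $(\pi^{eff}, x^{com})$ admits a profitable unilateral deviation from truthful reporting. Negating Definition~\ref{def_linear}, there exist a trip profile $\theta$, a feasible allocation $\pi$, and an index $j\in N$ (possibly $j=i^*$) such that
\[ \Delta := v_{i^*}(\pi(\theta), p) - \bigl[\, p_j v_{i^*}(\pi(\theta), (1,p_{-j})) + (1-p_j) v_{i^*}(\pi(\theta), (0,p_{-j})) \,\bigr] \neq 0. \]
Commuter $j$ will be the prospective deviator.

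The core calculation reuses the decomposition already derived in the proof of Theorem~\ref{thm_exic_com}: if $j$ reports $\hat{\theta}_j$ and everyone else reports truthfully, her true expected utility can be written as
\[ U_j(\hat{\theta}_j) = \sum_{k\in N} L_k\bigl(\pi^{eff}(\hat{\theta}_j, \theta_{-j})\bigr) - h_j(\theta_{-j}), \qquad L_k(\pi) := p_j v_k(\pi, (1,p_{-j})) + (1-p_j) v_k(\pi, (0,p_{-j})). \]
Whenever $v_k$ is linear in $p_j$, $L_k(\pi) = v_k(\pi, p)$. Setting $W(\pi) := \sum_k v_k(\pi, p)$ and $\Phi(\pi) := W(\pi) - \sum_k L_k(\pi)$, we therefore have $U_j(\hat{\theta}_j) = W - \Phi - h_j$ evaluated at $\pi^{eff}(\hat{\theta}_j, \theta_{-j})$. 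The mechanism always selects the $W$-maximiser, but $j$ secretly prefers the $(W-\Phi)$-maximiser; when these disagree, $j$ benefits by steering the mechanism away from the former.

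Next I would construct a concrete instance in which the two maximisers disagree. Starting from the witness, I would pad the trips of commuters other than $i^*$ and $j$ (introducing auxiliary commuters if necessary) so that (a) their valuations are external-commit-independent and thus linear in $p_j$, which forces $\Phi(\pi) = \Delta$ at the witness allocation; and (b) only two allocations remain candidates for efficiency: the witness $\pi$ and an alternative $\pi'$ in which commuter $i^*$ does not share rides with commuter $j$, so that $v_{i^*}(\pi', p)$ does not depend on $p_j$ and hence $\Phi(\pi') = 0$. All other allocations are eliminated by giving the padded commuters sufficiently large penalties on them. Finally, by tuning the padded constants I can place $W(\pi) - W(\pi')$ at any desired value: if $\Delta > 0$, choose some $\epsilon \in (0, \Delta)$, so $\pi$ is the unique efficient allocation yet $j$'s true utility at $\pi$ falls short of her utility at $\pi'$ by $\Delta - \epsilon > 0$, and $j$ induces $\pi'$ by reporting a sufficiently deflated $\hat{v}_j$ at $\pi$; the symmetric construction with $\epsilon \in (\Delta, 0)$ handles $\Delta < 0$. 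In either case $j$ strictly gains relative to truthful reporting, contradicting ex-post truthfulness.

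The main obstacle is the calibration in the construction step: simultaneously keeping the witness gap $\Delta$ intact, enforcing linearity of all other $v_k$ in $p_j$ so that $\Phi$ isolates exactly this gap, restricting the feasible allocations to the two chosen candidates, and placing the $W$-gap strictly inside $(0, |\Delta|)$ with the correct sign. The model's freedom in choosing padded trips should make this achievable, but since $v_{i^*}$ is a fixed function of the entire profile and the chosen allocation, the padding has to be picked so as to respect rather than disturb the very non-linearity we are trying to exploit.
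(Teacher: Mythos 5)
Your proposal is correct and follows essentially the same route as the paper: you take the contrapositive, let the index $j$ from the failed linearity witness be the deviator, decompose $j$'s expected utility as the true social welfare plus a nonlinearity correction minus $h_j(\theta_{-j})$ (your $\Phi$ is exactly the negative of the paper's $\sum_k \delta_k$), and then tune valuations so that the welfare-maximising allocation and the allocation maximising $j$'s true utility disagree, forcing a profitable misreport. The only cosmetic difference is that you force the correction term to vanish at the alternative allocation by construction, whereas the paper handles a possibly nonzero $\hat{\delta}_i$ there by a two-case comparison.
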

\begin{proof}
Assume that $v_i$ is not linear in commitment, i.e., there exist $\hat{\theta}_{-i}$, an allocation $\pi$, and some $j\in N$ (without loss of generality, assume that $j \neq i$) such that $v_i(\pi(\theta_i, \hat{\theta}_{-i}), (p_i, \hat{p}_{-i})) \neq \hat{p}_j\times v_i(\pi(\theta_i, \hat{\theta}_{-i}), (1, (p_i, \hat{p}_{-i})_{-j})) + (1-\hat{p}_j)\times v_i(\pi(\theta_i, \hat{\theta}_{-i}), (0, (p_i, \hat{p}_{-i})_{-j}))$. Similar to the proof of Theorem~\ref{thm_Groves_notIC}, we can find a profile ${\theta}_{-i}$ such that $p_{-i} = \hat{p}_{-i}$ and $\pi^{eff}(\theta) = \pi(\theta_i, \hat{\theta}_{-i})$. Applying $(\pi^{eff}, x^{com})$ on $\theta$, when $j$ finally commits to her trip,
$j$'s utility is $u_j^1 = v_j(\pi^{eff}(\theta), (1, p_{-j})) - h_j({\theta}_{-j}) + V^1_{-j}(\theta, \pi^{eff})$ and her utility if she fails is $u_j^0 = v_j(\pi^{eff}(\theta), (0, p_{-j})) - h_j({\theta}_{-j}) + V^0_{-j}(\theta, \pi^{eff})$. Thus, $j$'s expected utility is:
\begin{align}
p_j\times u_j^1 + (1-p_j)\times u_j^0 =& \nonumber \\
&p_j\times v_i(\pi^{eff}(\theta), (1, p_{-j})) \label{eq_1-1} \\
&+ (1-p_j)\times v_i(\pi^{eff}(\theta), (0, p_{-j})) \label{eq_2-1} \\
&+ p_j\sum_{k \in N\setminus \{i\}} v_k(\pi^{eff}(\theta), (1, p_{-j})) \label{eq_3-1} \\
&+ (1-p_j)\sum_{k \in N\setminus \{i\}} v_k(\pi^{eff}(\theta), (0, p_{-j})) \label{eq_4-1} \\
&- h_j({\theta}_{-j}). \nonumber
\end{align}
Given the non-linear in commitment assumption, \eqref{eq_1-1} and \eqref{eq_2-1} together can be written as $v_i(\pi^{eff}(\theta), p) + \delta_i$ where $\delta_i = \eqref{eq_1-1} + \eqref{eq_2-1} - v_i(\pi^{eff}(\theta), p)$. Similarly substitution can be carried out for all other commuters $k\in N\setminus \{i\}$ in \eqref{eq_3-1} and \eqref{eq_4-1} regardless of whether $v_k$ is linear in commitment. After this substitution, $j$'s utility can be written as:
\begin{align}
p_j\times u_j^1 + (1-p_j)\times u_j^0 =
\underbrace{\sum_{k \in N} v_k(\pi^{eff}(\theta), p)}_{(13)}
+ \underbrace{\sum_{k \in N} \delta_k}_{(14)}
- h_j(\theta_{-j}). \nonumber
\end{align}
Consider a suboptimal allocation $\hat{\pi}(\theta) \neq \pi^{eff}(\theta)$, if $\hat{\pi}(\theta)$ is chosen by the mechanism, then $j$'s utility can be written as:
\begin{align}
\hat{u}_j =
\underbrace{\sum_{k \in N} v_k(\hat{\pi}(\theta), p)}_{(15)}
+ \underbrace{\sum_{k \in N} \hat{\delta}_k}_{(16)}
- h_j(\theta_{-j}). \nonumber
\end{align}
In the above two utility representations, we know that terms $(13) > (15)$ 
because $\pi^{eff}$ is efficient, but terms $(14)$ and $(16)$ 
can be any real numbers. In what follows, we tune the valuation of $j$ such that the optimal allocation is switching between $\hat{\pi}(\theta)$ and $\pi^{eff}(\theta)$, and $j$ is incentivized to misreport.

In the extreme case where all commuters except $i$'s valuations are linear in commitment, we have $\sum_{k \in N} \delta_k = \delta_i \neq 0$ and $\sum_{k \in N} \hat{\delta}_k = \hat{\delta}_i$ (possibly $0$). If $\delta_i > \hat{\delta}_i$, we have 
$(13) +\delta_i > (15) +\hat{\delta}_i$. 
In this case, we can increase $j$'s valuation for the suboptimal allocation $\hat{\pi}(\theta)$ such that $\hat{\pi}(\theta)$ becomes optimal, i.e., $(13) < (15)$, but $(13) +\delta_i > (15) +\hat{\delta}_i$ still holds. 
Therefore, if $j$'s true valuation is the one that chooses $\hat{\pi}(\theta)$ as the optimal allocation, then $j$ would misreport to get allocation $\pi^{eff}(\theta)$ which gives her higher utility. If $\delta_i < \hat{\delta}_i$, then we can easily modify $j$'s valuation for $\hat{\pi}(\theta)$ such that $(13) +\delta_i < (15) +\hat{\delta}_i$ but $(13) > (15)$ still holds.  
In this case, if $j$'s true valuation again is the one just modified, then $j$ would misreport to get $\hat{\pi}(\theta)$ with a better utility. 
\qed
\end{proof}

Note that Theorem~\ref{thm_necessary_lic} does not say that given a specific profile $\theta$, all $v_i$ have to be linear in commitment for $(\pi^{eff}, x^{com})$ to be ex-post truthful. Take the example discussed in the beginning of this section, if $p_i \geq r_j$, then $i$ is not incentivized to misreport and $(\pi^{eff}, x^{com})$ is ex-post truthful, although $j$'s valuation is not linear in commitment. However, since each commuter $i$ does not know the others' trips, to truthfully implement $(\pi^{eff}, x^{com})$ in an ex-post equilibrium for all possible trips of the others, Theorem~\ref{thm_necessary_lic} says that $v_i$ has to be linear in commitment.
\section{Conclusions}
\label{sect_con}
We have explored the issue of incentive mechanism design in a ridesharing setting where commuters have uncertainty of completing their trips. We have shown that the class of Groves mechanisms are hardly applicable in this setting and therefore proposed the commit-based-pay mechanism which pays commuters according to the realization of the commitments of their trips. We have further demonstrated that the commit-based-pay mechanism is ex-post truthful, the best incentive we can provide in this setting without sacrificing social welfare, if and only if the commuters' valuations satisfy the linear in commitment condition. 

Our work also leaves several directions for future research. The linear in commitment condition suggests that we need other solutions to offer incentives in settings where a commuter may only share with those commuters who have less uncertainty about their trips. Except the incentive problem, there are other important properties of the system that have not been touched in this work, especially the proposed mechanisms might run a large deficit~\cite{Myerson_1983} and the scheduling problem is computationally hard. Moreover, in real-world applications, commuters might not have the perfect knowledge of their travel uncertainty and we may consider discretizing the uncertainty. 

\comment{
\noindent In this paper, we have studied the mechanism design problem of a novel auction-based ridesharing system, which requires each commuter to report a trip that she is planning to finish and especially the probability that she is going to commit/execute the trip. More importantly, we do not restrict the valuation format of the commuters, although we mostly focused on the valuations without externalities, which covers most of the valuation domains of ridesharing. Given the commuters' trip and valuation reports, the goal of the system is to maximize social welfare (i.e., efficiency) and incentivize commuters' participation (i.e., truthfulness) via automatically computing schedules and payments for all commuters. To achieve the goal, we showed that VCG is hardly applicable due to the VCG payment setting depends on commuters' reported probability of commitment which might be different from their true/realized probability of commitment. However, if commuters' probabilities of commitment are publicly known, then VCG is applicable to all valuation domains without externalities. Given the limited applicability of VCG, we proposed another efficient mechanism which pays commuters according to their realized commitments, while the VCG payments are fixed before they commit their trips.  We proved that the new mechanism is Nash truthful for all valuations without externalities and satisfying a natural linearity condition in their commitments. However, it is still very hard to achieve completely truthfulness for the second mechanism, because a commuter's payment still depends on others' reported probabilities of commitment.

This is the first time that commuters' uncertainty of committing their trips is considered in a ridesharing system design. Considering this uncertainty, we showed the applicability of VCG and the commit-based-pay mechanism we proposed under very broad valuation domains, which will further guide us not only for designing other ridesharing systems but also for modelling of commuters' valuations with or without considering their uncertainty of commitment.

There are many directions worth further investigation. In order to achieve more general results, we have not looked at many specific valuation domains rather than their classifications, so it would be very interesting to model and study some of them thoroughly. The computational hardness of computing the allocation and the payment has not been touched in this work, which becomes more challenging with the probability of commitment. Moreover, we have not found a completely truthful mechanism when their probabilities of commitment are private. It is worth checking whether there exists any meaningful truthful mechanism under this situation. As mentioned, if we consider commuters' probability of commitment as a kind of trust for them given by, say, the system, this probability becomes public and VCG can often be applied. However, trust information has been used differently in different environments and if it depends on agents' behaviour, agents might be incentivized to manipulate, e.g., \textit{ebay}. Therefore, how trust systems can be deployed in ridesharing is another very interesting direction.}

\bibliographystyle{splncs}
\bibliography{ijcai15rs}

\end{document}